\def\BibTeX{{\rm B\kern-.05em{\sc i\kern-.025em b}\kern-.08em
    T\kern-.1667em\lower.7ex\hbox{E}\kern-.125emX}}
\newtheorem{theorem}{Theorem}[section]
\newtheorem{proof}{Proof}[section]
\begin{document}

\title{State Estimation Of a Quantum Cavity Driven by single-photon\\
\thanks{Email: a-daeichian@araku.ac.ir, a.daeichian@gmail.com; The Author acknowledges support from Arak University (Project No: 96/15582-1396/12/21).}
}

\author{\IEEEauthorblockN{Abolghasem Daeichian}
\IEEEauthorblockA{\textit{Department of Electrical Engineering, Faculty of Engineering},
\textit{Arak University},
Arak, 38156-8-8349, Iran \\
\textit{Institute of advanced technology}, \textit{Arak University}, Arak, 38156-8-8349, Iran}
}

\maketitle

\begin{abstract}
The cavity is a fundamental ingredient of quantum optical systems. This paper concerns the behavior of a quantum cavity driven by non-classical field in single-photon state. To this end, the number operator has been opted to reveal the number of photons inside the cavity. Then, the quantum filtering equations have been employed to derive a stochastic master equation for the cavity which is driven by single-photon and is observed by either Homodyne or photon-counting detector. Finally, the state of the cavity has been estimated by the derived equations, and the results have been compared with the conventional master equation.
\end{abstract}

\begin{IEEEkeywords}
Cavity, Quantum filtering, Quantum unraveling, single-photon
\end{IEEEkeywords}

\section{Introduction}
Nowadays, technologies based on quantum theory have been realized and implemented as well as progressing rapidly ~\cite{RN1339,RN1338}. Developing quantum control theory and practice is preliminary to create quantum systems \cite{RN332,DS2012}. Feedback control based on measurements is an effective method which is frequently employed. The Markovian feedback is denoted to control laws which are a simple function of measures, and the dynamical equation is given by the master equation (ME). However, in many cases, the feedback law depends on an estimation of the state, which is called Bayesian method \cite{RN1343}. The state estimation (filtering) is not only used in feedback control but also disclose the behavior of quantum systems which are not possible to find out directly without demolition \cite{RN1290}. 

Quantum filtering (unraveling) was firstly founded by Belavkin \cite{belavkin1989nondemolition}. It is known as stochastic master equation (SME) in quantum optic which represents the stochastic evolution of system conditioned on measurements. The filtering framework for a system driven by field in a vacuum, Gaussian, squeezed, or coherent states are developed \cite{RN25}. Also, filtering equation for a quantum system which is driven by field in non-classical states, particularly single-photon state, has been derived both in \cite{JMN2011,gough2012quantum} and by considering a non-Markovian approach in \cite{RN1307}. The filtering equation is also derived by assuming detector efficiency and utilizing multiple measurements \cite{RN1308}. 

The fields in non-classical states including single-photon, and the cavity are both fundamental components of photonic quantum systems \cite{RN1337}. A single-photon has been experimentally generated by different methods, such as exploiting the single-photon emission from natural and artificial atoms \cite{RN1340} or utilizing correlated photon pairs in nonlinear crystals \cite{RN1341} or entangled states \cite{RN1342}. Also, the behavior of a cavity has been under attention \cite{RN1157}.

This study concerns the dynamic of a cavity when it is driven by field in single-photon state. The behavior of a cavity can be investigated by the number of photons inside the cavity in time. Thus, the number operator has been selected as a system operator to estimate the number of photons inside the cavity by using filtering equations. The filtering equation has been derived for both Homodyne and photon-counting measurements. The derived SME has been simulated, and the results are compared with ME.

The paper has been folded as follows. The SLH representation of a cavity is presented in section \ref{Section_Model}. In Section \ref{Section_filtering}, quantum filtering of a system driven by field in single-photon state is reviewed from literature. The filtering equations of a cavity driven by a single-photon and observed by Homodyne or photon-counting detector are derived in section \ref{Section_filtering_Cavity}. The behavior of the cavity is simulated in section \ref{Section_Results}. Finally, the paper is concluded in section \ref{Section_Conclusion}.

\section{Cavity Model}
\label{Section_Model}
The state of a quantum system is denoted by a \emph{ket} $|\psi\rangle$ (typically a column complex vector in a Hilbert space) or density operator $\rho=|\psi\rangle\langle\psi|$ where the \emph{bra} $\langle\psi|$ is the Hermitian conjugate of the ket. Any measurable quantity (observable) of a quantum system is associated with a hermitian operator $X$ which acts on the state space and results to a real physical value \cite{S1994}.

Usually, unconditional evolution of density operator or observable (ME) is given for quantum systems. If the joint system state and field \{system$\otimes$field\} denoted by $\rho_{sf}=\rho\otimes\rho_{field}$, then the system state is given by tracing (averaging) over the field. 
A field may be observed by homodyne or photon-counting detector which gives a classical output signal $Y(t)$. SME for $\rho$ or $X$ \cite{gough2012quantum} or stochastic Schrödinger equation (SSE) for $|\psi\rangle$ present estimated state conditioned on measurements $Y(t)$ \cite{RN1157}. The ME is obtained by classical averaging over SME or SSE. 

An optical cavity (or resonator) is a fundamental component of optical systems which may be employed as a benchmark to assess control algorithms in the quantum regime. It is an arrangement of at least two mirrors that forms a standing wave. Two facing mirrors is the most popular structure of optical cavities.

\begin{figure}[!htb]
	\begin{center}
		\includegraphics[width=0.4\textwidth]{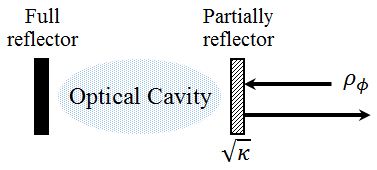}
		\caption{Schematic of a typical one-sided Cavity}
		\label{FigCavity}
	\end{center}
\end{figure}

Consider a \emph{one-sided} cavity which consists of a perfect reflecting mirror and another one with photon decay rate $\kappa$; see Fig.\ref{FigCavity}. The inter-cavity quantized field has creation and annihilation operators $a^\dagger$ and $a$, respectively, which adhere to the commutation relation $[a,a^\dagger]=aa^\dagger-a^\dagger a=I$ where $I$ is the identity and $\dagger$ means the Hermitian transpose. The matrix representation of the annihilation and creation operators are
\begin{eqnarray}\label{}
a=\left[a_{ij}\right]&=&\left[\begin{array}{cccc}
			0 & \sqrt{1} & 0        & \cdots \\
			0 & 0        & \sqrt{2} & \cdots \\
			0 & 0        & 0        & \cdots \\
   	   \vdots & \vdots   & \vdots	& \ddots\end{array}\right]\\
a^\dagger=\left[a_{ij}^\dagger\right]&=&\left[\begin{array}{cccc}
			0 & 0 		 & 0        & \cdots \\
 	 \sqrt{1} & 0        & 0 		& \cdots \\
			0 & \sqrt{2} & 0        & \cdots \\
	   \vdots & \vdots   & \vdots	& \ddots\end{array}\right].
\end{eqnarray}

The energy eigenstates are denoted as $|n\rangle$, $n=0,1,2,\cdots$ where $a|n\rangle=\sqrt{n}|n-1\rangle$, $a^\dagger|n\rangle=\sqrt{n+1}|n+1\rangle$,  and $a|0\rangle=0$. The Hamiltonian of the cavity is given by $H=\Delta a^\dagger a$ where $\Delta$ is the frequency detuning \cite{schleich2011quantum}. Overall, the cavity model in SLH-framework gives \cite{combes2017slh}:
\begin{equation}\label{Model_1S_Cavity}
G_c=\left(I,\sqrt{\kappa}a,\Delta a^\dagger a\right).
\end{equation}
where $S=I$, $L=\sqrt{\kappa}a$, and $H=\Delta a^\dagger a$ are scattering matrix, coupling operator, and Hamiltonian of the cavity, respectively.

\section{Quantum Filtering under single-photon}
\label{Section_filtering}
Let a field in single-photon state $|1_\xi\rangle$ drives a quantum system $G=(S,L,H)$. The reflected field is observed by a detector (see Fig.\ref{FigFiltering}).

\begin{figure}[!htb]
	\begin{center}
		\includegraphics[width=0.4\textwidth]{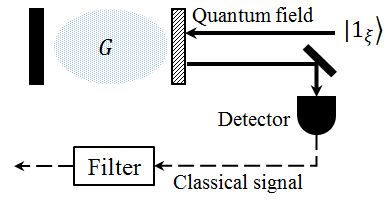}
		\caption{Schematic of output field detection and filtering}
		\label{FigFiltering}
	\end{center}
\end{figure}

When the field is in the vacuum state, the quantum filtering equation has been frequently utilized in researches \cite{RN1343}. However, when the field is in single-photon state, a source model $G_s$ for the field should be considered such that the total model $G_T=G\vartriangleleft G_s$ has a vacuum input \cite{gough2012quantum}. The series product $\vartriangleleft$ has been defined in \cite{RN443}. The source model of a single-photon $|1_\xi\rangle$ with temporal shape $\xi_t$ is \cite{gough2012quantum,JMN2011}:
\begin{equation}\label{Ancilla_SinglePhoton}
G_{1_\xi}=\left(1,\frac{\xi_t}{\sqrt{\int_{t}^{\infty}|\xi_s|^2ds}}\sigma_-,0\right)
\end{equation}
Thus, The filtering equation in the case of a single-photon input for any system operator $X$ could be written done easily by substituting the total system $G_T$ model in equations of filtering for a system driven by vacuum state and tracing over the field as \cite{gough2012quantum}:
\begin{strip}
\begin{eqnarray}
d\pi_t^{11}(X)&=&\{\pi_t^{11}(\mathcal{L}X)+\pi_t^{01}(S^\dagger [X,L])\xi^*_t+\pi_t^{10}([L^\dagger,X]S)\xi_t+\pi_t^{00}(S^\dagger XS-X)|\xi_t|^2\}dt\nonumber\\
&+&\{\pi_t^{11}(XL+L^\dagger X)+\pi_t^{01}(S^\dagger X)\xi^*_t+\pi_t^{10}(XS)\xi_t-\pi_t^{11}(X)K_t\}dW(t)\nonumber\\
d\pi_t^{10}(X)&=&\{\pi_t^{10}(\mathcal{L}X)+\pi_t^{00}(S^\dagger[X,L])\xi^*_t\}dt
+\{\pi_t^{10}(XL+L^\dagger X)+\pi_t^{00}(S^\dagger X)\xi^*_t-\pi_t^{10}(X)K_t\}dW(t)\nonumber\\
d\pi_t^{01}(X)&=&\{\pi_t^{01}(\mathcal{L}X)+\pi_t^{00}([L^\dagger,X]S)\xi_t\}dt
+\{\pi_t^{01}(XL+L^\dagger X)+\pi_t^{00}(XS)\xi_t-\pi_t^{01}(X)K_t\}dW(t)\nonumber\\
d\pi_t^{00}(X)&=&\{\pi_t^{00}(\mathcal{L}X)\}dt
+\{\pi_t^{00}(XL+L^\dagger X)-\pi_t^{00}(X)K_t\}dW(t)
\label{SME_X_HomodyneDet}
\end{eqnarray}
and
\begin{eqnarray}
d\pi_t^{11}(X) &=& \{\pi_t^{11}(\mathcal{L}X)+\pi_t^{01}(S^\dagger [X,L])\xi^*_t+\pi_t^{10}([L^\dagger,X]S)\xi_t+\pi_t^{00}(S^\dagger XS-X)|\xi_t|^2\}dt\nonumber\\
&+&\{\nu_t^{-1}\left(\pi_t^{11}(L^\dagger XL)+\pi_t^{01}(S^\dagger XL)\xi^*_t+\pi_t^{10}(L^\dagger XS)\xi_t+\pi_t^{00}(S^\dagger XS)|\xi_t|^2\right)-\pi_t^{11}(X)\}dN(t)\nonumber\\
d\pi_t^{10}(X) &=& \{\pi_t^{10}(\mathcal{L}X)+\pi_t^{00}(S^\dagger[X,L])\xi^*_t\}dt
+\{\nu_t^{-1}\left(\pi_t^{10}(L^\dagger XL)+\pi_t^{00}(S^\dagger XL)\xi^*_t\right)-\pi_t^{10}(X)\}dN(t)\nonumber\\
d\pi_t^{01}(X) &=& \{\pi_t^{01}(\mathcal{L}X)+\pi_t^{00}([L^\dagger,X]S)\xi_t\}dt
+\{\nu_t^{-1}\left(\pi_t^{01}(L^\dagger XL)+\pi_t^{00}(L^\dagger XS)\xi_t\right)-\pi_t^{01}(X)\}dN(t)\nonumber\\
d\pi_t^{00}(X) &=& \{\pi_t^{00}(\mathcal{L}X)\}dt
+\{\nu_t^{-1}\left(\pi_t^{00}(L^\dagger XL)\right)-\pi_t^{00}(X)\}dN(t)
\label{SME_X_PhotonDet}
\end{eqnarray}
\end{strip}
for Homodyne and photon-counting detector, respectively. Here, $\mathcal{L}X=-\mathrm{i}[X,H]+L^\dagger X L-\frac{1}{2}\left(L^\dagger L X+X L^\dagger L\right)$, $K_t=\pi_t^{11}(L+L^\dagger)+\pi_t^{10}(S^\dagger)\xi^*_t+\pi_t^{01}(S)\xi_t$, and $\nu_t=\pi_t^{11}(L^\dagger L)+\pi_t^{01}(S^\dagger L)\xi^*_t+\pi_t^{10}(L^\dagger S)\xi_t+\pi_t^{00}(I)|\xi_t|^2$. Also, $dW(t)=dY(t)-K_tdt$ and $dN(t)=dY(t)-\nu_tdt$ are the Wiener and Poisson process, respectively, related to measurements $Y(t)$ which adhere $dW^2=dt$, $dN^2=dN$ and $\textbf{E}[dN(t)]=\langle L^\dagger L\rangle dt$.
It worth to note that $\pi_t^{10}(X)=\pi_t^{01}(X)^\dagger$.
Considering $tr\{\rho X(t)\}=tr\{\rho(t)X\}$ where $tr$ is the abbreviation of trace, then, dynamics of the density operator can also be derived which are given in \cite{gough2012quantum}:
for the Homodyne and the photon-counting detector, respectively. Here, $\mathcal{L}^\star\rho_t=-\mathrm{i}[H,\rho]+L\rho L^\dagger-\frac{1}{2}\left(L^\dagger L\rho+\rho L^\dagger L\right)$, $K_t=tr\{(L+L^\dagger)\rho^{11}(t)\}+tr\{S\rho^{01}(t)\}\xi_t+tr\{S^\dagger\rho^{10}(t)\}\xi^*_t$, and $\nu_t=tr\{\rho_t^{11}L^\dagger L\}+tr\{\rho_t^{10}S^\dagger L\}\xi^*_t+tr\{\rho_t^{01}L^\dagger S\}\xi_t+tr\{\rho_t^{00}I\}|\xi_t|^2$.

\section{Cavity evolution driven by single-photon}
\label{Section_filtering_Cavity}
The behavior of a quantum system is studied through observables. In particular, The behavior of a cavity could be investigated by observing the number of photons in the cavity. It is given by tracing over the number operator $n=a^\dagger a$ as $\mathbf{Tr}[\rho(t)n]$ in the Schrodinger picture or $\mathbf{Tr}[\rho(0)n(t)]$ in the Heisenberg picture. So, by considering $X=n$, It is possible to disclose the behavior of a cavity driven by fields in single-photon state conditioned on Homodyne or photon-counting measurements. 
\subsection{Homodyne Detection}
Considering the Homodyne detector, we have:

\begin{theorem}\label{Theorem_HD}
	The number of photons inside a one-sided cavity (\ref{Model_1S_Cavity}), which is driven by field in single-photon state and conditioned on Homodyne measurements, evolves as:
	\begin{strip}
	\begin{eqnarray}\label{C_SME_HD}
	d\pi_t^{11}(n)&=&\{-\kappa\pi_t^{11}(n)-\sqrt{\kappa}\pi_t^{01}(a)\xi^*_t-\sqrt{\kappa}\pi_t^{10}(a^\dagger)\xi_t\}dt+\{\sqrt{\kappa}\pi_t^{11}(na+a^\dagger n)+\pi_t^{01}(n)\xi^*_t+\pi_t^{10}(n)\xi_t-\pi_t^{11}(n)K_t\}dW\nonumber\\
	d\pi_t^{10}(n)&=&\{-\kappa\pi_t^{10}(n)-\sqrt{\kappa}\pi_t^{00}(a)\xi^*_t\}dt+\{\sqrt{\kappa}\pi_t^{10}(na+a^\dagger n)+\pi_t^{00}(n)\xi^*_t-\pi_t^{10}(n)K_t\}dW\nonumber\\
	d\pi_t^{01}(n)&=&\{-\kappa\pi_t^{01}(n)-\sqrt{\kappa}\pi_t^{00}(a^\dagger)\xi_t\}dt+\{\sqrt{\kappa}\pi_t^{01}(na+a^\dagger n)+\pi_t^{00}(n)\xi_t-\pi_t^{01}(n)K_t\}dW\nonumber\\
	d\pi_t^{00}(n)&=&\{-\kappa\pi_t^{00}(n)\}dt+\{\sqrt{\kappa}\pi_t^{00}(na+a^\dagger n)-\pi_t^{00}(n)K_t\}dW\nonumber\\
	\end{eqnarray}
	\end{strip}
where $K_t=\sqrt{\kappa}\pi_t^{11}(a+a^\dagger)+\pi_t^{01}(I)\xi^*_t+\pi_t^{10}(I)\xi_t$.
\end{theorem}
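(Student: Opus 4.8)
The plan is to specialize the general single-photon filtering equations \eqref{SME_X_HomodyneDet} to the cavity model $G_c=(I,\sqrt{\kappa}a,\Delta a^\dagger a)$ with the observable $X=n=a^\dagger a$. Since $S=I$, every occurrence of $S$, $S^\dagger$, and $S^\dagger X S$ collapses: $S^\dagger[X,L]=[n,\sqrt{\kappa}a]$, $[L^\dagger,X]S=[\sqrt{\kappa}a^\dagger,n]$, and $S^\dagger X S - X = 0$. The last identity is what kills the $|\xi_t|^2$ term in the $d\pi_t^{11}$ drift and makes the $d\pi_t^{00}$ equation homogeneous, so it is worth highlighting. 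With $L=\sqrt{\kappa}a$, the coefficient $K_t$ immediately reduces to $\sqrt{\kappa}\pi_t^{11}(a+a^\dagger)+\pi_t^{01}(I)\xi^*_t+\pi_t^{10}(I)\xi_t$, matching the statement.

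Next I would compute the two commutators and the Lindbladian action on $n$. Using $a^\dagger a\, a = a(a^\dagger a - I)$ one gets $[n,a]=-a$ and $[a^\dagger,n]=[a^\dagger, a^\dagger a]=-a^\dagger$ (equivalently $[n,a^\dagger]=a^\dagger$). Hence $S^\dagger[X,L]=-\sqrt{\kappa}\,a$ and $[L^\dagger,X]S=-\sqrt{\kappa}\,a^\dagger$, which produces the $-\sqrt{\kappa}\pi_t^{01}(a)\xi^*_t$ and $-\sqrt{\kappa}\pi_t^{10}(a^\dagger)\xi_t$ terms in the first line, and the analogous single terms in the $d\pi_t^{10}$ and $d\pi_t^{01}$ lines. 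For the drift one needs $\mathcal{L}n = -\mathrm{i}[n,\Delta a^\dagger a] + \kappa\left(a^\dagger n a - \tfrac12(n\cdot n \text{-ordered terms})\right)$; the Hamiltonian term vanishes since $[n,a^\dagger a]=0$, and a short calculation with $a^\dagger(a^\dagger a)a = a^\dagger a^\dagger a a$ and $a^\dagger a a^\dagger a = a^\dagger(a^\dagger a + I)a$ gives $\mathcal{L}n = -\kappa n$. This is the computation that makes all four drift terms collapse to $-\kappa\pi_t^{\bullet\bullet}(n)$ (plus the $\xi$-dependent commutator terms), so I expect this to be the one place a reader would want to see the normal-ordering done carefully — it is the main (though still routine) obstacle.

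Finally, for the diffusion coefficients I would substitute $XL+L^\dagger X = \sqrt{\kappa}(na + a^\dagger n)$ and $S^\dagger X = n$, $XS = n$ directly into \eqref{SME_X_HomodyneDet}, which yields the $\sqrt{\kappa}\pi_t^{\bullet\bullet}(na+a^\dagger n)$ terms together with $\pi_t^{01}(n)\xi^*_t$, $\pi_t^{10}(n)\xi_t$, and the $-\pi_t^{\bullet\bullet}(n)K_t$ innovation terms exactly as displayed. Assembling the four lines and noting $dW=dW(t)$ completes the proof. The argument is entirely a substitution-and-simplification exercise once the two commutator identities and $\mathcal{L}n=-\kappa n$ are in hand; no new analytic input beyond the general filtering equations of Section \ref{Section_filtering} is required.
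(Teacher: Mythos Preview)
Your proposal is correct and follows essentially the same route as the paper's own proof: substitute $X=n$ and the cavity data $(S,L,H)=(I,\sqrt{\kappa}a,\Delta a^\dagger a)$ into the general single-photon Homodyne filter \eqref{SME_X_HomodyneDet}, then use the canonical commutation relation to obtain $\mathcal{L}n=-\kappa n$, $S^\dagger[n,L]=-\sqrt{\kappa}a$, $[L^\dagger,n]S=-\sqrt{\kappa}a^\dagger$, and $S^\dagger n S-n=0$. Your write-up is in fact more explicit than the paper's, which only displays the $\mathcal{L}n$ and $S^\dagger[n,L]$ computations and declares that the remaining terms are handled ``in the same way.''
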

\begin{proof}
	The SME Eqs. \ref{SME_X_HomodyneDet} give the evolution of any system operator $X$. So, we consider $X=n$ and substitute $S$, $L$, and $H$ from Eq. (\ref{Model_1S_Cavity}) into Eq.\ref{SME_X_HomodyneDet}. The first term of Eq. (\ref{SME_X_HomodyneDet}) becomes:
	\begin{eqnarray}
	\pi_t^{ij}(\mathcal{L}n)&=&\pi_t^{ij}(-\mathrm{i}n\Delta a^\dagger a+\mathrm{i}\Delta a^\dagger an+\kappa a^\dagger na-\frac{1}{2}\kappa a^\dagger an\nonumber\\
	&-&\frac{1}{2}\kappa a^\dagger a),\qquad i,j=0,1 \nonumber
	\end{eqnarray}
	Replacing $n=a^\dagger a$ and using the commutation relation $\left[a,a^\dagger\right]=I$ results to
	\begin{eqnarray}
	\pi_t^{ij}(\mathcal{L}n)&=&-\kappa\pi_t^{ij}(n) \nonumber
	\end{eqnarray} 
	For the term $\pi_t^{ij}(S^\dagger [X,L])$ we have:
	\begin{eqnarray}
	\pi_t^{ij}(S^\dagger [X,L])&=&\pi_t^{ij}(\sqrt{\kappa}(na-an))\nonumber\\
	&=&\pi_t^{ij}(\sqrt{\kappa}(a^\dagger aa-aa^\dagger a))\nonumber\\
	&=&\pi_t^{ij}(\sqrt{\kappa}(a^\dagger a-aa^\dagger)a) \nonumber
	\end{eqnarray}
	using commutation relation leads to
	\begin{eqnarray}
	\pi_t^{ij}(S^\dagger[X,L])&=&-\kappa\pi_t^{ij}(a). \nonumber
	\end{eqnarray}
	The other terms could be derived in the same way. Finally, putting these term together leads to Eq. (\ref{C_SME_HD}).$\blacksquare$
\end{proof}
It could be seen that the first and higher order operators in terms of $a$ and $a^\dagger$ as well as $\pi_t{ij}(I)$ are required in Eqs. (\ref{C_SME_HD}). Therefore, dynamical equations for these operators must be derived. The complexity of stochastic equations for higher-order operators, such as $a^\dagger aa$, increase by the order. In simulations, the higher-order terms are usually neglected by some rational assumptions. For instance, assuming the cavity has no photon initially, and the input field is in single-photon state lead to the fact that higher-order terms vanish. The dynamical equations for creation, annihilation, and $I$ are:
\begin{strip}
\begin{eqnarray}
d\pi_t^{11}(a^\dagger)&=&\{-(\mathrm{i}\Delta+\kappa/2)\pi_t^{11}(a^\dagger)-\sqrt{\kappa}\pi_t^{01}(I)\xi^*_t\}dt+\{\sqrt{\kappa} \pi_t^{11} (n+a^\dagger a^\dagger )+\pi_t^{01} (a^\dagger ) \xi^*_t+\pi_t^{10} (a^\dagger )\xi_t-\pi_t^{11} (a^\dagger ) K_t \}dW \nonumber\\
d\pi_t^{10} (a^\dagger )&=&\{-(\mathrm{i}\Delta+\kappa/2) \pi_t^{10} (a^\dagger )-\sqrt{\kappa} \pi_t^{00} (I) \xi^*_t\}dt+\{\sqrt{\kappa} \pi_t^{10} (n+a^\dagger a^\dagger )+\pi_t^{00} (a^\dagger ) \xi^*_t-\pi_t^{10} (a^\dagger ) K_t \}dW\nonumber\\
d\pi_t^{01} (a^\dagger )&=&\{-(\mathrm{i}\Delta+\kappa/2) \pi_t^{01} (a^\dagger )\}dt+\{\sqrt{\kappa} \pi_t^{01} (n+a^\dagger a^\dagger )+\pi_t^{00} (a^\dagger )\xi_t-\pi_t^{01} (a^\dagger ) K_t \}dW\nonumber\\
d\pi_t^{00} (a^\dagger)&=&\{-(\mathrm{i}\Delta+\kappa/2) \pi_t^{00} (a^\dagger )\}dt+\{\sqrt{\kappa} \pi_t^{00} (n+a^\dagger a^\dagger )-\pi_t^{00} (a^\dagger ) K_t \}dW\nonumber\\
d\pi_t^{11}(a)&=&\{-(i\Delta+\kappa/2) \pi_t^{11} (a)-\sqrt{\kappa} \pi_t^{10} (I)\xi_t\}dt+\{\sqrt{\kappa} \pi_t^{11} (n+aa)+\pi_t^{01} (a) \xi^*_t+\pi_t^{10} (a)\xi_t-\pi_t^{11} (a) K_t \}dW\nonumber\\
d\pi_t^{10} (a)   &=&\{-(\mathrm{i}\Delta+\kappa/2) \pi_t^{10} (a)\}dt+\{\sqrt{\kappa} \pi_t^{10} (n+aa)+\pi_t^{00} (a) \xi^*_t-\pi_t^{10} (a) K_t \}dW\nonumber\\
d\pi_t^{01} (a)   &=&\{-(i\Delta+\kappa/2) \pi_t^{01} (a)-\sqrt{\kappa} \pi_t^{00} (I)\xi_t\}dt+\{\sqrt{\kappa} \pi_t^{01} (aa+n)+\pi_t^{00} (a)\xi_t-\pi_t^{01} (a) K_t \}dW \nonumber\\
d\pi_t^{00} (a)   &=&\{-(\mathrm{i}\Delta+\kappa/2) \pi_t^{00} (a)\}dt+\{\sqrt{\kappa} \pi_t^{00} (aa+n)-\pi_t^{00} (a) K_t \}dW\nonumber\\
d\pi_t^{10} (I)    &=& \{\sqrt{\kappa} \pi_t^{10} (a+a^\dagger )+\pi_t^{00} (I) \xi^*_t-\pi_t^{10} (I) K_t \}dW\nonumber\\
d\pi_t^{01} (I)    &=&\{\sqrt{\kappa} \pi_t^{01} (a+a^\dagger )+\pi_t^{00} (I)\xi_t-\pi_t^{01} (I) K_t \}dW\nonumber\\
d\pi_t^{00} (I)    &=&\{\sqrt{\kappa} \pi_t^{00} (a+a^\dagger )-\pi_t^{00} (I) K_t \}dW
\end{eqnarray}
\end{strip}
These equations can easily be written done by substituting $a$, $a^\dagger$, and $I$ into Eq. (\ref{C_SME_HD}) and doing a bit of calculation by applying commutation relation.

\subsection{Photon Detection}
Assume that the cavity is observed by a photon-counting detector. In this situation the filtering equations are:
\begin{theorem}\label{Theorem_PD}
	The number of photons inside a one-sided cavity (\ref{Model_1S_Cavity}), which is driven by field in single-photon state and conditioned on photon-counting measurements, evolves as:
	\begin{strip}
		\begin{eqnarray}\label{C_SME_PD}
		d\pi_t^{11}(n)&=&\{-\kappa\pi_t^{11}(n)-\sqrt{\kappa}\pi_t^{01}(a)\xi^*_t-\sqrt{\kappa}\pi_t^{10}(a^\dagger)\xi_t\}dt\nonumber\\
		&+&\{\nu_t^{-1}\left(\kappa\pi_t^{11}(a^\dagger na)+\sqrt{\kappa}\pi_t^{01}(na)\xi^*_t+\sqrt{\kappa}\pi_t^{10}(a^\dagger n)\xi_t+\pi_t^{00}(n)|\xi_t|^2\right)-\pi_t^{11}(n)\}dN \nonumber\\
		d\pi_t^{10}(n)&=&\{-\kappa\pi_t^{10}(n)-\sqrt{\kappa}\pi_t^{00}(a)\xi^*_t\}dt+\{\nu_t^{-1}\left(\kappa\pi_t^{10}(a^\dagger na)+\sqrt{\kappa}\pi_t^{00}(na)\xi^*_t\right)-\pi_t^{10}(n)\}dN \nonumber\\
		d\pi_t^{01}(n)&=&\{-\kappa\pi_t^{01}(n)-\sqrt{\kappa}\pi_t^{00}(a^\dagger)\xi_t\}dt+\{\nu_t^{-1}[\kappa\pi_t^{01}(a^\dagger na)+\sqrt{\kappa}\pi_t^{00}(a^\dagger n)\xi_t]-\pi_t^{01}(n)\}dN\nonumber\\
		d\pi_t^{00}(n)&=&\{-\kappa\pi_t^{00}(n)\}dt+\{\nu_t^{-1}\left(\kappa\pi_t^{00}(a^\dagger na)\right)-\pi_t^{00}(n)\}dN 
		\end{eqnarray}
	\end{strip}
	where $\nu_t=\kappa\pi_t^{11}(n)+\sqrt{\kappa}\pi_t^{01}(a)\xi^*_t+\sqrt{\kappa}\pi_t^{10}(a^\dagger)\xi_t+\pi_t^{00}(I) |\xi_t|^2$.
\end{theorem}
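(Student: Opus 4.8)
The plan is to mirror the proof of Theorem~\ref{Theorem_HD} exactly, but starting from the photon-counting filtering equations~(\ref{SME_X_PhotonDet}) instead of the Homodyne equations~(\ref{SME_X_HomodyneDet}). Concretely, I would set $X=n=a^\dagger a$ and substitute the cavity SLH parameters $S=I$, $L=\sqrt{\kappa}a$, $H=\Delta a^\dagger a$ from~(\ref{Model_1S_Cavity}) term by term into each of the four equations for $d\pi_t^{11}(n)$, $d\pi_t^{10}(n)$, $d\pi_t^{01}(n)$, $d\pi_t^{00}(n)$.

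First I would handle the drift terms. These are \emph{identical} to the Homodyne case, since the $dt$-parts of~(\ref{SME_X_HomodyneDet}) and~(\ref{SME_X_PhotonDet}) coincide; so I can simply reuse the computations from the proof of Theorem~\ref{Theorem_HD}: $\pi_t^{ij}(\mathcal{L}n)=-\kappa\pi_t^{ij}(n)$ via $n=a^\dagger a$ and $[a,a^\dagger]=I$, and $\pi_t^{ij}(S^\dagger[n,L])=\sqrt{\kappa}\,\pi_t^{ij}(a^\dagger a a-aa^\dagger a)=-\sqrt{\kappa}\,\pi_t^{ij}(a)$, with $\pi_t^{ij}([L^\dagger,n]S)$ its adjoint giving $-\sqrt{\kappa}\,\pi_t^{ij}(a^\dagger)$. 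The term $\pi_t^{00}(S^\dagger n S-n)|\xi_t|^2$ vanishes because $S=I$. This immediately reproduces the $\{\cdots\}dt$ brackets in~(\ref{C_SME_PD}).

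Next I would handle the jump ($dN$) terms. Here the building blocks are $L^\dagger n L=\sqrt{\kappa}a^\dagger\, n\,\sqrt{\kappa}a=\kappa\, a^\dagger n a$, $S^\dagger n L=\sqrt{\kappa}\, n a$, $L^\dagger n S=\sqrt{\kappa}\, a^\dagger n$, and $S^\dagger n S=n$; substituting $S=I$ throughout simplifies everything. Assembling these inside the $\nu_t^{-1}(\cdots)$ brackets of~(\ref{SME_X_PhotonDet}) gives precisely the $dN$ terms of~(\ref{C_SME_PD}). Finally, for the normalization factor I substitute into $\nu_t=\pi_t^{11}(L^\dagger L)+\pi_t^{01}(S^\dagger L)\xi_t^*+\pi_t^{10}(L^\dagger S)\xi_t+\pi_t^{00}(I)|\xi_t|^2$, using $L^\dagger L=\kappa a^\dagger a=\kappa n$, $S^\dagger L=\sqrt{\kappa}a$, $L^\dagger S=\sqrt{\kappa}a^\dagger$, to obtain $\nu_t=\kappa\pi_t^{11}(n)+\sqrt{\kappa}\pi_t^{01}(a)\xi_t^*+\sqrt{\kappa}\pi_t^{10}(a^\dagger)\xi_t+\pi_t^{00}(I)|\xi_t|^2$, as stated.

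There is no real obstacle: the argument is entirely a substitution-and-simplification exercise, the only subtlety being careful bookkeeping of which operator products (e.g.\ $a^\dagger n a$ versus $a^\dagger a^\dagger a a$) appear and consistent use of $[a,a^\dagger]=I$; one notes, as after Theorem~\ref{Theorem_HD}, that the right-hand sides again involve higher-order operators such as $a^\dagger n a$ whose own dynamics would in principle have to be adjoined, but this closure issue is orthogonal to proving the stated equations. I would end the proof with $\blacksquare$.
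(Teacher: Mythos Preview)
Your proposal is correct and follows exactly the paper's own approach: the paper's proof is a one-line reference back to Theorem~\ref{Theorem_HD}, stating that the result follows by setting $X=n$ and substituting $S$, $L$, $H$ from~(\ref{Model_1S_Cavity}) into~(\ref{SME_X_PhotonDet}). Your write-up is in fact more detailed than the paper's, spelling out the drift and jump terms explicitly, but the method is identical.
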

\begin{proof}
	This theorem could be proved in the same way as theorem \ref{Theorem_HD} by considering $X=n$ and substituting $S$, $L$, and $H$ from Eq. (\ref{Model_1S_Cavity}) into Eq. (\ref{SME_X_PhotonDet}).
	$\blacksquare$
\end{proof}
Similar to the case of Homodyne detection, the evolution of the creation, annihilation, and unit operators could be derived as:
\begin{strip}
	\begin{eqnarray}
	d\pi_t^{10}(a^\dagger)&=&\{-(\mathrm{i}\Delta+\kappa/2)\pi_t^{10}(a^\dagger)-\sqrt{\kappa}\pi_t^{00}(I)\xi^*_t\}dt+\{\nu_t^{-1}\left(\kappa\pi_t^{10}(a^\dagger a^\dagger a)+\sqrt{\kappa}\pi_t^{00}(n)\xi^*_t\right)-\pi_t^{10}(a^\dagger)\}dN \nonumber\\
	d\pi_t^{00}(a^\dagger)&=&\{-(\mathrm{i}\Delta+\kappa/2)\pi_t^{00}(a^\dagger)\}dt+\{\nu_t^{-1}\left(\kappa\pi_t^{00}(a^\dagger a^\dagger a)\right)-\pi_t^{00}(a^\dagger)\}dN \nonumber\\
	d\pi_t^{01}(a)&=&\{-(\mathrm{i}\Delta+\kappa/2)\pi_t^{01}(a)-\sqrt{\kappa}\pi_t^{00}(I)\xi_t\}dt+\{\nu_t^{-1}[\kappa\pi_t^{01}(a^\dagger aa)+\sqrt{\kappa}\pi_t^{00}(n)\xi_t]-\pi_t^{01}(a)\}dN \nonumber\\
	d\pi_t^{00}(a)&=&\{-(\mathrm{i}\Delta+\kappa/2)\pi_t^{00}(a)\}dt+\{\nu_t^{-1}\left(\kappa\pi_t^{00}(a^\dagger aa)\right)-\pi_t^{00}(a)\}dN \nonumber\\
	d\pi_t^{00}(I)&=&\{\nu_t^{-1}\left(\kappa\pi_t^{00}(n)\right)-\pi_t^{00}(I)\}dN  
	\end{eqnarray}
\end{strip} 
$dN$ is a Poisson process that may have different realizations.

\section{Results and Discussions}
\label{Section_Results}
In order to simulate the behavior of a cavity driven by single-photon, the following assumptions are considered:
\begin{itemize}
	\item The cavity has no photon initially. So, the higher-order operators are vanishes.
	\item We have $\pi_t^{01}(X)=\pi_t^{10}(X^\dagger)^\dagger$. Thus, $\pi_t^{10}(a^\dagger)=\pi_t^{01}(a)^*$, $\pi_t^{01}(a^\dagger)=\pi_t^{10}(a)^*$, and $\pi_t^{10}(I)=\pi_t^{01}(I)^*$.
	\item  The initial conditions for any operator $X$ are given by $\pi_t^{01}(X)=\pi_t^{10}(X)=0$ and $\pi_t^{11}(X)=\pi_t^{00}(X)=\langle\eta|X|\eta\rangle$. As a result, $\pi_t^{01}(a)=\pi_t^{10}(a)=\pi_t^{01}(a^\dagger)=\pi_t^{10}(a^\dagger)=\pi_t^{01}(n)=\pi_t^{10}(n)=\pi_t^{01}(I)=\pi_t^{10}(I)=0$, $\pi_t^{11}(a)=\pi_t^{00}(a)=\pi_t^{11}(a^\dagger)=\pi_t^{00}(a^\dagger)=\pi_t^{11}(n)=\pi_t^{00}(n)=0$, and $\pi_t^{11}(I)=\pi_t^{00}(I)=1$.
	\item As $\pi_t^{00}(n)$, $\pi_t^{00}(a)$, and $\pi_t^{00}(a^\dagger)$ are equal to zero at $t=0$, then $d\pi_t^{00}(n)=0$, $d\pi_t^{00}(a)=0$, $d\pi_t^{10}(n)=0$, $d\pi_t^{01}(n)=0$, and $d\pi_t^{00}(a^\dagger)=0$. 
\end{itemize}
Another point in regards of doing simulation is that presented stochastic dynamical equations are in Ito representation while the Stratonovich definition is easier to simulate numerically due to normal rules of calculus do not apply to the Ito integral. Also, SME is a stochastic process that multiple realizations (or trajectories) could be simulated.

Let a single-photon has the shape:
\begin{eqnarray}\label{SinglePhoton}
\xi_t&=&\sqrt{\gamma}e^{-\frac{\gamma}{2}(t-t_c)}u(t-t_0)
\end{eqnarray}
where $\gamma$ is the decay rate of a two-level atom which emits this photon and $u(t-t_0)$ is the Heaviside step function \cite{RN1157}. One hundred different trajectories for the number of photons inside a cavity driven by a single-photon Eq.(\ref{SinglePhoton}) and observed by Homodyne detector are depicted in Fig.\ref{HD_Trajectoreis} for $t_0=3$, $\kappa=0.1$, and $\Delta=0$.
An acceptable matching between the average of trajectories and the ME can be seen. Also, the cavity may excite to the $|1\rangle$ in some realizations. In addition, the cavity with the larger decay rate, the faster emits the photons to the bath. This phenomenon has been shown in Fig.\ref{HD_Diff_kappa} by the ME dynamics.

\begin{figure}[!htb]
	\begin{center}
		\includegraphics[width=0.45\textwidth]{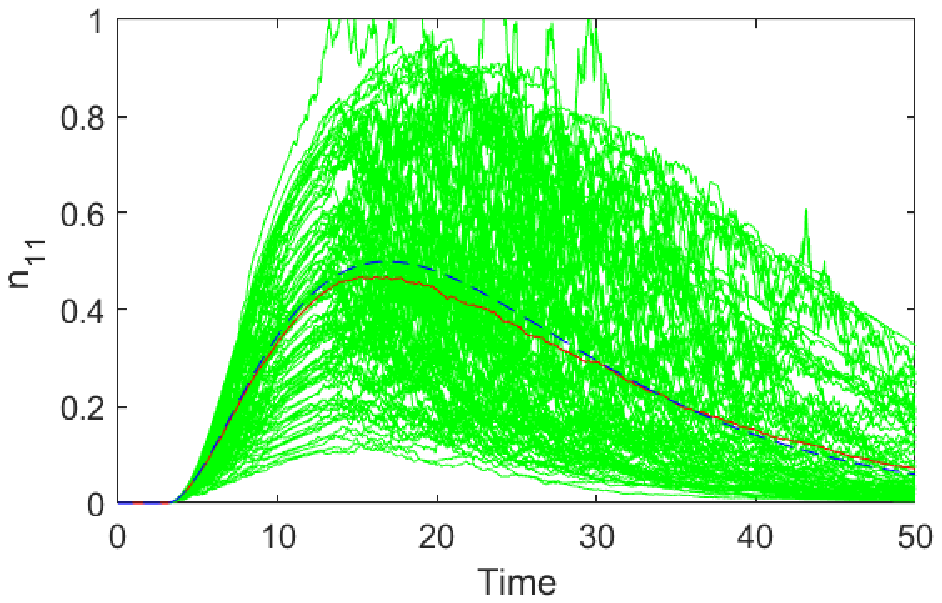}
		\caption{The number of photons inside a cavity driven by single-photon with $t_0=3$, $\kappa=0.1$, and $\Delta=0$. Gray(Green): different trajectories; Black solid line (red): the average of different trajectories; Dotted line (blue): Master equation dynamic}
		\label{HD_Trajectoreis}
	\end{center}
\end{figure}
\begin{figure}[!htb]
	\begin{center}
		\includegraphics[width=0.45\textwidth]{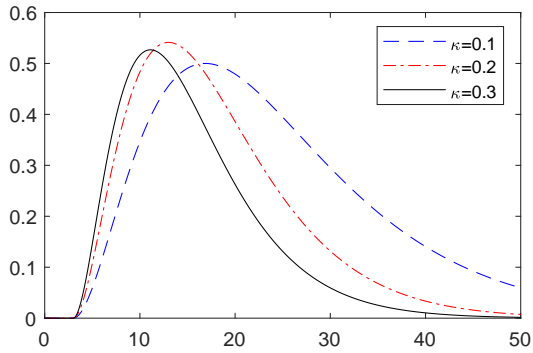}
		\caption{The master equation for different $\kappa$}
		\label{HD_Diff_kappa}
	\end{center}
\end{figure}

When a cavity is driven by a single-photon and is observed by a photon-counting detector, the estimated state of the cavity gradually goes from $|0\rangle$ to $|1\rangle$. But, the estimated state will collapse to $|0\rangle$ when a photon is observed by the detector. One hundred different trajectories for this situation are displayed in Fig. \ref{PD_Trajectoreis}. A satisfactory match between average of different trajectories and ME can be seen.
\begin{figure}[!htb]
	\begin{center}
		\includegraphics[width=0.45\textwidth]{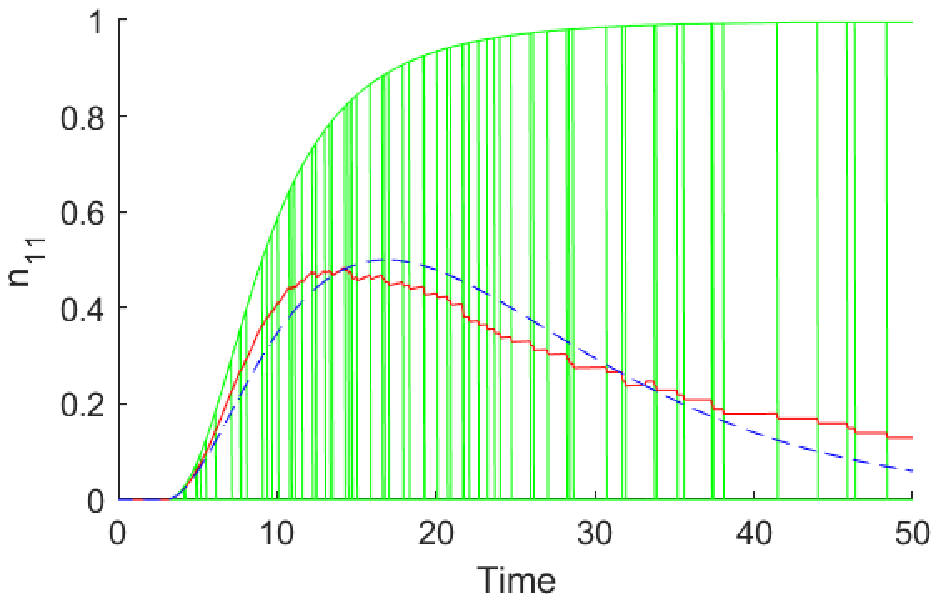}
		\caption{The number of photons inside a cavity driven by single-photon with $t_0=3$, $\kappa=0.1$, and $\Delta=0$. Gray(Green): different trajectories; Black solid line (red): the average of different trajectories; Dotted line (blue): Master equation dynamic}
		\label{PD_Trajectoreis}
	\end{center}
\end{figure}
 
\section{Conclusion}
\label{Section_Conclusion}
This paper investigated the behavior of a quantum cavity driven by filed in single-photon state. The number operator was selected to reveal the number of photons inside the cavity in time. The filtering equations were derived for the number operator in two situations: the output field is observed by either Homodyne or photon-counting detector. Also, some assumptions were made to simplify the problem. Finally, derived stochastic master equations were simulated, and the results were compared with the conventional master equation.

\bibliographystyle{IEEEtran}
\bibliography{Citations}{}

\end{document}